\documentclass{llncs}
\usepackage{amssymb}
\usepackage{amsmath}
\usepackage{mathrsfs}
\usepackage{graphicx}
\usepackage{subfigure}
\usepackage{tikz}
\usepackage{xspace}
\usepackage{enumerate}
\usetikzlibrary{automata,positioning}
\tikzstyle{every path}=[arrows=-latex]
\tikzstyle{every loop}=[arrows=-latex]
\tikzstyle{accepting}=[double distance=2pt]

\newcommand{\SO}{\textsc{Shift-Or}\xspace}
\newcommand{\SA}{\textsc{Shift-And}\xspace}
\newcommand{\BNDM}{\textsc{BNDM}\xspace}
\newcommand{\bpstyle}[1]{\mathsf{#1}}
\newcounter{instr}
\newcommand{\ninstr}{\refstepcounter{instr}\theinstr.}

\newcommand{\AND}{\mathrel{\&}}
\newcommand{\OR}{\mathrel{|}}
\newcommand{\NOT}{\mathop{\sim}}

\newcommand{\defAs}{\:=\!\!_{\mbox{\tiny Def}}\:}
\newcommand{\ceil}[1]{\lceil #1 \rceil}

\newcommand{\Suff}{\mathit{Suff}\xspace}

\begin{document}

\author{Emanuele Giaquinta\inst{1}}
\institute{Department of Computer Science and Engineering, Aalto University, Finland
	  \email{emanuele.giaquinta@aalto.fi}}
\title{Run-Length Encoded Nondeterministic KMP and Suffix Automata}

\maketitle

\begin{abstract}
We present a novel bit-parallel representation, based on the
run-length encoding, of the nondeterministic KMP and suffix automata
for a string $P$ with at least two distinct symbols. Our method is
targeted to the case of long strings over small alphabets and
complements the method of Cantone et al. (2012), which is effective
for long strings over large alphabets. Our encoding requires
$O((\sigma + m)\lceil \rho / w\rceil)$ space and allows one to
simulate the automata on a string in time $O(\lceil \rho / w\rceil)$
per transition, where $\sigma$ is the alphabet size, $m$ is the length
of $P$, $\rho$ is the length of the run-length encoding of $P$ and $w$
is the machine word size in bits. The input string can be given in
either unencoded or run-length encoded form.
\end{abstract}

\section{Introduction}

The string matching problem consists in finding all the occurrences of
a string $P$ of length $m$ in a string $T$ of length $n$, both over a
finite alphabet $\Sigma$ of size $\sigma$. The matching can be either
exact or approximate, according to some metric which measures the
closeness of a match. The finite automata for the language $\Sigma^*
P$ (prefix automaton) and $\Suff(P)$ (suffix automaton), where
$\Suff(P)$ is the set of suffixes of $P$, are the main building blocks
of very efficient algorithms for the exact and approximate string
matching problem. Two fundamental algorithms for the exact problem, based on the deterministic version of these automata,
are the KMP and BDM algorithms, which run in $O(n)$ and $O(nm)$
worst-case time, respectively, using $O(m)$
space~\cite{DBLP:journals/siamcomp/KnuthMP77,DBLP:books/ox/CrochemoreR94}.
In the average case, the BDM algorithm achieves the optimal
$\mathcal{O}(n\log_{\sigma}(m)/m)$ time bound. The nondeterministic
version of the prefix and suffix automata can be simulated using an
encoding, known as bit-parallelism, based on bit-vectors and
word-level parallelism~\cite{DBLP:journals/cacm/Baeza-YatesG92}. The
variants of the KMP algorithm based on the nondeterministic prefix
automaton, known as \SO and \SA, run in $O(n\ceil{m/w})$ worst-case
time and use $O(\sigma\ceil{m/w})$ space, where $w$ is the machine
word size in
bits~\cite{DBLP:journals/cacm/Baeza-YatesG92,DBLP:journals/cacm/WuM92}.
Similarly, the variant of the BDM algorithm based on the
nondeterministic suffix automaton, known as \BNDM, runs in
$\mathcal{O}(nm\ceil{m/w})$ worst-case time and uses
$O(\sigma\ceil{m/w})$ space~\cite{DBLP:journals/jea/NavarroR00}. In
the average case, the \BNDM algorithm runs in
$\mathcal{O}(n\log_{\sigma}(m)/w)$ time, which is suboptimal for
patterns whose length is greater than $w$. There also exists a variant
of \SO which achieves $\mathcal{O}(n\log_{\sigma}(m)/w)$ time in the
average case~\cite{DBLP:journals/jda/FredrikssonG09}. As for the
approximate string matching problem, there are also various algorithms
based on the nondeterministic prefix and suffix
automata~\cite{DBLP:journals/cacm/WuM92,DBLP:journals/jea/NavarroR00,DBLP:journals/algorithmica/Baeza-YatesN99,DBLP:journals/ipl/Hyyro08,DBLP:journals/algorithmica/HyyroN05}.

In general, the bit-parallel algorithms are suboptimal if compared to
the their ``deterministic'' counterparts in the case $m > w$, because
of the $\ceil{m/w}$ additional term in the time complexity. A way to
overcome this problem is to use a filtering method, namely, searching
for the prefix of $P$ of length $w$ and verifying each occurrence with
the naive algorithm. Assuming uniformly random strings, the average
time complexity of \SA and \BNDM with this method is $O(n)$ and $O(n
\log_{\sigma} w / w)$, respectively. Recently, a few approaches were
proposed to improve the case of long patterns. In 2010 Durian et al.
presented three variants of \BNDM tuned for the case of long patterns,
two of which are optimal in the average
case~\cite{DBLP:conf/wea/DurianPST10}. In the same year, Cantone et
al. presented a different encoding of the prefix and suffix automata,
based on word-level parallelism and on a particular factorization on
strings~\cite{DBLP:journals/iandc/CantoneFG12}. The general approach
is to devise, given a factorization $f$ on strings, a bit-parallel
encoding of the automata based on $f$ such that one transition can be
performed in $O(\ceil{|f(P)| / w})$ time instead of $O(\ceil{m/ w})$,
at the price of more space. The gain is two-fold: i) if $|f(P)| <
|P|$, then the overhead of the simulation is reduced. In particular,
there is no overhead if $|f(P)| \le w$, which is preferable if $|f(P)|
< |P|$; ii) if we use the filtering method, we can search for the
longest substring $P'$ of $P$ such that $|f(P')|\le w$. This yields
$O(n \log_{\sigma} |P'| / |P'|)$ average time for \BNDM, which is
preferable if $|P'| > w$. The method of Cantone et al. is effective
for long patterns over large alphabets. Their factorization is such
that $\ceil{m/\sigma}\le |f(P)|\le m$ and their encoding requires
$O(\sigma^2 \ceil{|f(P)| / w})$ space.

In this paper we present a novel encoding of the prefix and suffix
automata, based on this approach, where $f(P)$ is the run-length
encoding of $P$, provided that $P$ has at least two distinct symbols.
The run-length encoding of a string is a simple encoding where each
maximal consecutive sequence of the same symbol is encoded as a pair
consisting of the symbol plus the length of the sequence. Our encoding
requires $O((\sigma + m)\ceil{\rho / w})$ space and allows one to
simulate the automata in $O(\ceil{\rho / w})$ time per transition,
where $\rho$ is the length of the run-length encoding of $P$. While
the present algorithm uses the run-length encoding, the input string
can be given in either unencoded or run-length encoded form. The
run-length encoding is suitable for strings over small alphabets.
Therefore, our method complements the one of Cantone et al., which is
effective for large alphabets.

\section{Notions and Basic Definitions}

Let $\Sigma$ be a finite alphabet of symbols and let $\Sigma^*$ be the
set of strings over $\Sigma$. The empty string $\varepsilon$ is a
string of length $0$. Given a string $S$, we denote with $|S|$ the
length of $S$ and with $S[i]$ the $i$-th symbol of $S$, for $0\le i <
|S|$. The concatenation of two strings $S$ and $\bar{S}$ is denoted by
$S\bar{S}$. Given two strings $S$ and $\bar{S}$, $S$ is a substring of
$\bar{S}$ if there are indices $0\le i,j < |S|$ such that $\bar{S} =
S[i] ... S[j]$. If $i = 0$ ($j = |S| - 1$) then $\bar{S}$ is a prefix
(suffix) of $S$. The set $\Suff(S)$ is the set of all suffixes of $S$.
We denote by $S[i .. j]$ the substring $S[i] .. S[j]$ of $S$. For $i >
j$ $S[i .. j] = \varepsilon$. We denote by $S^k$ the concatenation of
$k$ $S$'s, for $S\in\Sigma^*$ and $k\ge 1$. The string $S^r$ is the
reverse of the string $S$, i.e., $S^r=S[|S|-1]S[|S|-2]\ldots S[0]$.

Given a string $P\in \Sigma^*$ of length $m$, we denote by
$\mathcal{A}(P)=(Q,\Sigma,\delta,q_0,F)$ the nondeterministic finite
automaton (NFA) for the language $\Sigma^*P$ of all strings in $\Sigma^{*}$
whose suffix of length $m$ is $P$, where:
\begin{itemize}
    \item
    $Q=\{q_0,q_1,\ldots, q_m\}$ \qquad ($q_{0}$ is the initial state)

    \item the transition function $\delta: Q \times \Sigma
    \longrightarrow \mathscr{P}(Q)$ is defined by:
    $$
    \delta(q_{i},c) \defAs \begin{cases}
    \{q_{0},q_{1}\} & \text{if } i = 0 \text{ and } c = P[0]\\
    \{q_{0}\} & \text{if } i = 0 \text{ and } c \neq P[0]\\
    \{q_{i+1}\} & \text{if } 1 \le i < m \text{ and } c = P[i]\\
    \emptyset & \text{otherwise}
    \end{cases}
    $$
    \item
    $F=\{q_m\}$ \qquad ($F$ is the set of final states).
\end{itemize}

Similarly, we denote by $\mathcal{S}(P)=(Q,\Sigma,\delta,I,F)$ the
nondeterministic suffix automaton with $\varepsilon$-transitions for
the language $\Suff(P)$ of the suffixes of $P$, where:
\begin{itemize}
    \item
    $Q=\{I,q_0,q_1,\ldots, q_m\}$ \qquad ($I$ is the initial state)

    \item the transition function $\delta: Q \times (\Sigma \cup
    \{\varepsilon\}) \longrightarrow \mathscr{P}(Q)$ is defined by:
    $$
    \delta(q,c) \defAs \begin{cases}
    \{q_{i+1}\} & \text{if } q = q_{i} \text{ and } c = P[i] ~~(0
    \le i < m)\\
    Q & \text{if } q = I  \text{ and } c =
    \varepsilon\\
    \emptyset & \text{otherwise}
    \end{cases}
    $$
    \item
    $F=\{q_m\}$ \qquad ($F$ is the set of final states).
\end{itemize}

We use the notation $q_I$ to indicate the initial state of the
automaton, i.e., $q_I$ is $q_0$ for $\mathcal{A}(P)$ and $I$ for
$\mathcal{S}(P)$. The valid configurations $\delta^*(q_{I},S)$ which
are reachable by the automata $\mathcal{A}(P)$ and $\mathcal{S}(P)$ on
input $S \in \Sigma^{*}$ are defined recursively as follows:
$$
    \delta^*(q_{I},S)=_{\mathit{Def}}\begin{cases}
    E(q_I) & \text{if $S=\varepsilon$,}\\
    \bigcup_{q' \in \delta^{*}(q_{I},S')}\delta(q',c) & \text{if
    $S=S'c$,
    for some $c\in \Sigma$ and $S' \in \Sigma^{*}$.}
    \end{cases}
$$
where $E(q_I)$ denotes the $\varepsilon$-closure of $q_I$. Given a
string $P$, a run of $P$ is a maximal substring of $P$ containing
exactly one distinct symbol. The run-length encoding (RLE) of a string
$P$, denoted by $\textsc{rle}(P)$, is a sequence of pairs (runs)
$\langle (c_0, l_0), (c_2, l_2,), \ldots, (c_{\rho-1}, l_{\rho-1})
\rangle$ such that $c_i\in \Sigma$, $l_i\ge 1$, $c_i\neq c_{i+1}$ for
$0 \le i < \rho$, and $P = c_0^{l_0} c_1^{l_1} \ldots
c_{\rho-1}^{l_{\rho-1}}$. The starting (ending) position in $P$ and
length of the run $(c_i, l_i)$ are $\alpha_P(i) = \sum_{j=0}^{i-1}
l_j$ ($\beta_P(i) = \sum_{j=0}^{i} l_j - 1$) and $\ell_P(i) = l_i$, for
$i=0,\ldots,\rho-1$. We also put $\alpha_P(\rho) = |P|$.

Finally, we recall the notation of some bitwise infix operators on
computer words, namely the bitwise \texttt{and} ``$\&$'', the bitwise
\texttt{or} ``$|$'', the \texttt{left shift} ``$\ll$'' operator (which
shifts to the left its first argument by a number of bits equal to its
second argument), and the unary bitwise \texttt{not} operator
``$\NOT$''.

\begin{figure}[t]
\begin{center}
\subfigure{
\resizebox{0.47\textwidth}{!}{
\begin{tikzpicture}[node distance=2cm,auto]
\node[state] (q_0) {$\underline{0}$};
\node[state] (q_1) [right of=q_0] {$\underline{1}$};
\node[state] (q_2) [right of=q_1] {$2$};
\node[state] (q_3) [right of=q_2] {$\underline{3}$};
\node[state] (q_4) [right of=q_3] {$4$};
\node[state] (q_5) [right of=q_4] {$\underline{5}$};
\node[state,accepting] (q_6) [right of=q_5] {$6$};

\path[->] (q_0) edge node {$c$} (q_1)
                edge [loop above] node {$\Sigma$} ()
          (q_1) edge node {$t$} (q_2)
          (q_2) edge node {$t$} (q_3)
          (q_3) edge node {$c$} (q_4)
          (q_4) edge node {$c$} (q_5)
          (q_5) edge node {$t$} (q_6);

\end{tikzpicture}
}
}
\subfigure{
\resizebox{0.47\textwidth}{!}{
\begin{tikzpicture}[node distance=2cm,auto]
\node[state] (q_0) {$0$};
\node[state] (I) [above of=q_0] {$I$};
\node[state] (q_1) [right of=q_0] {$\underline{1}$};
\node[state] (q_2) [right of=q_1] {$2$};
\node[state] (q_3) [right of=q_2] {$\underline{3}$};
\node[state] (q_4) [right of=q_3] {$4$};
\node[state] (q_5) [right of=q_4] {$\underline{5}$};
\node[state,accepting] (q_6) [right of=q_5] {$6$};

\path[->] (q_0) edge node {$c$} (q_1)
          (q_1) edge node {$t$} (q_2)
          (q_2) edge node {$t$} (q_3)
          (q_3) edge node {$c$} (q_4)
          (q_4) edge node {$c$} (q_5)
          (q_5) edge node {$t$} (q_6)
          (I) edge node {$\varepsilon$} (q_0)
          (I) edge [bend left] node {$\varepsilon$} (q_1)
          (I) edge [bend left] node {$\varepsilon$} (q_2)
          (I) edge [bend left] node {$\varepsilon$} (q_3)
          (I) edge [bend left] node {$\varepsilon$} (q_4)
          (I) edge [bend left] node {$\varepsilon$} (q_5)
          (I) edge [bend left] node {$\varepsilon$} (q_6);
\end{tikzpicture}
}
}
\caption{(a) The automata $\mathcal{A}(P)$ and $\mathcal{S}(P)$ for
  the pattern $P = cttcct$. The state labels corresponding to
  the starting positions of the runs of $\textsc{RLE}(P)$ are
  underlined.}
\label{fig:example}
\end{center}
\end{figure}

\section{The \SA and \BNDM algorithms}

In this section we briefly describe the \SA and \BNDM algorithms.
Given a pattern $P$ of length $m$ and a text $T$ of length $n$, the
\SA and \BNDM algorithms find all the occurrences of $P$ in $T$. The
\SA algorithm works by simulating the $\mathcal{A}(P)$ automaton on
$T$ and reporting all the positions $j$ in $T$ such that the final
state of $\mathcal{A}(P)$ is active in the corresponding configuration
$\delta^*(q_I, T[0\,\ldots\,j])$. Instead, the \BNDM algorithm works
by sliding a window of length $m$ along $T$. For a given window ending
at position $j$, the algorithm simulates the automaton
$\mathcal{S}(P^r)$ on $(T[j-m+1\,\ldots\,j])^r$. Based on the
simulation, the algorithm computes the length $k$ and $k'$ of the
longest suffix of $T[j-m+1\,\ldots\,j]$ which is a prefix and a proper
prefix, respectively, of $P$ (i.e., a suffix of $P^r$). If $k = m$
then $T[j-m+1\,\ldots\,j] = P$ and the algorithm reports an occurrence
of $P$ at position $j$. The window is then shifted by $m-k'$ positions
to the right, so as to align it with the longest proper prefix of $P$
found. The automata are simulated using an encoding based on
bit-vectors and word-level parallelism. The algorithms run in
$O(n\ceil{m/w})$ and $O(nm\ceil{m/w})$ time, respectively, using
$O(\sigma\ceil{m/w})$ space, where $w$ is the word size in bits.

\section{RLE-based encoding of the Nondeterministic KMP and suffix automata}

Given a string $P$ of length $m$ defined over an alphabet $\Sigma$ of
size $\sigma$, let $\textsc{rle}(P) = \langle (c_0, l_0), (c_1, l_1,),
\ldots, (c_{\rho-1}, l_{\rho-1}) \rangle$ be the run-length encoding
of $P$. In the following, we describe how to simulate the
$\mathcal{A}(P)$ and $\mathcal{S}(P)$ automata, using word-level
parallelism, on a string $S$ of length $n$ in $O(\ceil{\rho / w})$
time per transition and $O((m + \sigma)\ceil{\rho / w})$ space. We
recall that the simulation of the automaton $\mathcal{A}(P)$ on a
string $S$ detects all the prefixes of $S$ whose suffix of length $m$
is $P$. Similarly, the simulation of the automaton $\mathcal{S}(P)$
detects all the prefixes of $S$ which are suffixes of $P$.

Let $\mathcal{I}(S) = \{ \alpha_S(i)\ |\ 0\le i\le |\textsc{rle}(S)|
\}$ be the set of starting positions of the runs of $S$, for a given
string $S$. Note that $0\in \mathcal{I}(S)$. Given a string $S$, we
denote with $D_j = \delta^*(q_I, S[0\,\ldots\, j-1])$ the
configuration of $\mathcal{A}(P)$ or $\mathcal{S}(P)$ after reading
$S[0\,\ldots\, j-1]$, for any $0\le j \le |S|$. The main idea of our
algorithm is to compute the configurations $D_j$ corresponding to
positions $j\in\mathcal{I}(S)$ only. We show that, for any two
consecutive positions $j,j'\in\mathcal{I}(T)$, 1) we need only the
states of the automaton corresponding to positions in $\mathcal{I}(P)$
to compute $D_{j'}$ from $D_j$; 2) if the pattern includes at least
distinct two symbols (i.e., if $\rho\ge 2$), there can be at most one
occurrence of $P$ ending in a position between $j$ and $j'$ (or
equivalently spanning a proper prefix of a given run of $S$).
Similarly, there can be at most one prefix of $S$ ending in a position
between $j$ and $j'$ which corresponds to a suffix of $P$ with at
least two distinct symbols. We start with the following Lemma:
\begin{lemma}
  Let $j\in\mathcal{I}(S)$. Then, for any $q_i\in D_j$ such that
  $i\notin\mathcal{I}(P)$, we have $\delta(q_i, S[j]) = \emptyset$.
\end{lemma}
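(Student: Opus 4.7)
The plan is a short contradiction argument driven by the run-boundary condition at $j$. I would first translate the two hypotheses into elementary statements about characters. Since $0$ and $m=\alpha_P(\rho)$ both belong to $\mathcal{I}(P)$, the assumption $i \notin \mathcal{I}(P)$ forces $1 \le i < m$ and places $i$ strictly inside a run of $P$, so $P[i] = P[i-1]$. Similarly, $j\in\mathcal{I}(S)$ together with $j\ge 1$ says that $j$ starts a fresh run of $S$, so $S[j]\neq S[j-1]$.

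Next, I would establish the following match invariant: if $q_i \in D_j$, then $P[0\,\ldots\,i-1]$ is a suffix of $S[0\,\ldots\,j-1]$; consequently, when $i\ge 1$ and $j\ge 1$, we have $S[j-1]=P[i-1]$. For $\mathcal{A}(P)$ this is just the defining property of the nondeterministic KMP automaton, easily proved by induction on $j$ using the definition of $\delta$. For $\mathcal{S}(P)$ the same statement follows by unrolling the definition of $\delta^*$: any derivation making $q_i$ active after $S[0\,\ldots\,j-1]$ has the form $I \to_{\varepsilon} q_k \to_{P[k]} q_{k+1} \to \cdots \to_{P[i-1]} q_i$, which requires $S[j-(i-k)\,\ldots\,j-1]=P[k\,\ldots\,i-1]$, and in particular $S[j-1]=P[i-1]$ whenever $i>k$.

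Now assume, for contradiction, that $\delta(q_i, S[j])\neq \emptyset$. By the definition of $\delta$ in either automaton, this forces $S[j]=P[i]$. Chaining this with $P[i]=P[i-1]$ and $S[j-1]=P[i-1]$ yields $S[j]=S[j-1]$, contradicting the run-boundary property at $j$. The corner case $j=0$ is trivial for $\mathcal{A}(P)$, since $D_0=\{q_0\}$ and $0\in\mathcal{I}(P)$; for $\mathcal{S}(P)$ at $j=0$ the $\varepsilon$-closure yields all states, so this case pertains to the window-initialization step of \BNDM and is handled by the surrounding algorithmic convention rather than by the lemma itself.

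The main obstacle I anticipate is cleanly separating, in $\mathcal{S}(P)$, the contribution of the initial $\varepsilon$-transitions from that of subsequent symbol reads when proving the match invariant; one must show that even though $I$ can ``seed'' every $q_k$ for free, any $q_i$ that is genuinely active after a non-empty prefix still inherits the equality $S[j-1]=P[i-1]$. Once the invariant is in place, the rest of the argument is a one-line character-matching contradiction exploiting the two run-boundary properties at $i$ and $j$.
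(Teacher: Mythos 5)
Your proof is correct and follows essentially the same contradiction argument as the paper: $q_i\in D_j$ gives $S[j-1]=P[i-1]$, $i\notin\mathcal{I}(P)$ gives $P[i-1]=P[i]$, and a nonempty transition would force $S[j]=P[i]$, contradicting $S[j]\neq S[j-1]$. You are in fact slightly more careful than the paper, which leaves the match invariant to ``by definition of $q_i$'' and silently assumes $j\ge 1$, whereas you spell out the invariant for both automata and correctly flag that the $j=0$ case for $\mathcal{S}(P)$ falls outside the lemma and is handled by the algorithm's separate treatment of the first transition.
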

\begin{proof}
  Let $q_i\in D_j$ with $i\notin\mathcal{I}(P)$. By definition of
  $q_i$ it follows that $S[j-1] = P[i-1]$ and $P[i-1] = P[i]$,
  respectively. Moreover, by $j\in\mathcal{I}(S)$, we have $S[j]\neq
  S[j-1]$. Suppose that $\delta(q_i, S[j])\neq \emptyset$, which
  implies $S[j] = P[i]$. Then we have $S[j] = P[i] = P[i-1] = S[j-1]$,
  which yields a contradiction.\qed
\end{proof}
This Lemma states that, for any $j\in\mathcal{I}(S)$, any state
$q_i\in D_j$ with $i\notin\mathcal{I}(P)$ is dead, as no transition is
possible from it on $S[j]$. Figure~\ref{fig:example} shows the
automata $\mathcal{A}(P)$ and $\mathcal{S}(P)$ for $P = cttcct$; the
state labels corresponding to indexes in $\mathcal{I}(P)$ are
underlined.

We assume that $P$ has at least two distinct symbols. The following
Lemma shows that, under this assumption, there can be at most one
configuration $D$ containing the final state $q_m$ with index between
$\alpha_S(i)+1$ and $\beta_S(i)+1$ in $S$, for any $1\le i\le
|\textsc{rle}(S)|$ (note that $i\ge 1$ implies that the corresponding
prefix of $S$ in the language has at least two distinct symbols).
\begin{lemma}\label{lemma:single}
  Let $i\in \{1, \ldots, |\textsc{rle}(S)| - 1 \}$. If $\rho\ge 2$,
  there exists at most one $j'$ in the interval $[\alpha_S(i),
    \beta_S(i)]$ such that $q_m\in D_{j'+1}$.
\end{lemma}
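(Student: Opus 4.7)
The plan is to argue by contradiction: assume there are two indices $j' < j''$ in $[\alpha_S(i), \beta_S(i)]$ with $q_m \in D_{j'+1}$ and $q_m \in D_{j''+1}$, and deduce that both positions must equal the same value, namely $\alpha_S(i) + L - 1$, where $L$ is the length of the last run of $P$. This will contradict $j' < j''$.

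First I would translate the activation of $q_m$ into a substring equality that works uniformly for both automata. For $\mathcal{A}(P)$, $q_m \in D_{j+1}$ means $S[j-m+1\,\ldots\,j] = P$; for $\mathcal{S}(P)$, it means $S[0\,\ldots\,j]$ is a suffix of $P$, and in particular $j + 1 \le m$. In either case, setting $K = \min(m, j+1)$, we obtain $P[m-k] = S[j-k+1]$ for $k = 1, \ldots, K$.

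Next I would exploit the run structure of $S$ and $P$. Let $c = c_i$, so that $S[\alpha_S(i)\,\ldots\,\beta_S(i)] = c^{\ell_S(i)}$ and $S[\alpha_S(i) - 1] \neq c$ (using $i \ge 1$). Taking $k = 1$ gives $P[m-1] = c$, and since $\rho \ge 2$ the last run of $P$ is $c^L$ with $L < m$. Applying the equality for $k = 1, \ldots, j - \alpha_S(i) + 1$ (whose corresponding positions in $S$ all lie in the $i$-th run) forces $P[m-k] = c$, so $L \ge j - \alpha_S(i) + 1$; applying it for $k = j - \alpha_S(i) + 2$ (which points to $S[\alpha_S(i) - 1] \neq c$) yields $P[m-k] \neq c$. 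Combining these, the last run of $P$ has length exactly $j - \alpha_S(i) + 1$, so $L = j - \alpha_S(i) + 1$ and $j = \alpha_S(i) + L - 1$ is uniquely determined by $P$ and $i$, contradicting $j' < j''$.

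The delicate step I expect to be the main obstacle is verifying that the values of $k$ I use actually lie in $[1, K]$, separately for each automaton. For $\mathcal{S}(P)$ this is immediate since $K = j + 1$ and $\alpha_S(i) \ge 1$. For $\mathcal{A}(P)$ I would argue that $j - m + 1 < \alpha_S(i)$ must hold, since otherwise $S[j-m+1\,\ldots\,j]$ would lie entirely inside the $i$-th run and force $P = c^m$, contradicting $\rho \ge 2$; this gives $j - \alpha_S(i) + 2 \le m = K$, as required. Beyond this index-bookkeeping and the uniform handling of the two automata, the argument is a direct consequence of the run $(c_i, \ell_S(i))$ being maximal on its left.
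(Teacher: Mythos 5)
Your proof is correct and follows essentially the same argument as the paper: both show that the activation of $q_m$ at position $j'+1$ forces the last run of $P$ to align exactly with the initial segment of the $i$-th run of $S$ (using that $S[\alpha_S(i)-1]\neq S[\alpha_S(i)]$), so that $j'=\alpha_S(i)+l_{\rho-1}-1$ is uniquely determined. Your version merely adds the explicit index-range bookkeeping (via $K=\min(m,j+1)$) that the paper leaves implicit.
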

\begin{proof}
  Let $j'\in [j_1, j_2]$ such that $q_m\in D_{j'+1}$, where $j_1 =
  \alpha_S(i)$ and $j_2 = \beta_S(i)$. This corresponds to i) $S[j' -
    |P| + 1\,\ldots\, j'] = P$ for $\mathcal{A}(P)$ and to ii) and
  $S[0\,\ldots\, j']\in \Suff(P)$ for $\mathcal{S}(P)$. Since $\rho\ge
  2$ and $i\ge 1$, in both cases, for this to hold we must have $S[j'
    - k] = c_{\rho-1}$, for $k = 0, \ldots, l_{\rho-1} - 1$, and $S[j'
    - l_{\rho-1}] = c_{\rho-2}$, where $c_{\rho-2}\neq c_{\rho-1}$. By
  definition of $j_1$, $S[j_1] = S[j_1+1] = \ldots = S[j']$ and
  $S[j_1-1]\neq S[j_1]$. Hence, the only possibility is $j' = j_1 +
  l_{\rho-1} - 1$.
\end{proof}
Specifically, the only configuration containing $q_m$, if any,
corresponds to index $\alpha_S(i) + l_{\rho-1}$. By definition of
$D_j$ and by Lemma $1$, we have
\begin{equation}\label{eq:delta}
\begin{array}{ll}
D_{\alpha_S(j+1)} & = \delta^*(q_I, S[0\,\ldots\,\beta_S(j)]) \\
               & = \bigcup_{q\in D_{\alpha_S(j)}}\delta^*(q, S[\alpha_S(j)\,\ldots\,\beta_S(j)]) \\
               & = \bigcup_{q\in D_{\alpha_S(j)}\cap\{q_i\ |\ i\in\mathcal{I}(P)\}}\delta^*(q, S[\alpha_S(j)]^{\ell_S(j)}) \\
\end{array}
\end{equation}
for any position $\alpha_S(j+1)$. Moreover, by Lemma $2$, there can be
at most one configuration $D$ with index between $\alpha_S(j)+1$ and
$\beta_S(j)+1$ containing the final state $q_m$, for $j\ge 1$. For $j
= 0$ it is easy to see that: i) in the case of the prefix automaton,
since $\rho\ge 2$, $q_m\notin D_{j'+1}$ for
$j'\in[\alpha_S(0),\beta_S(0)]$; ii) in the case of the suffix
automaton, if $S[0] = P[m-1]$ then $q_m\in D_{j'+1}$ for
$j'\in[0,\min(\ell_S(0),l_{\rho-1})-1]$, and $q_m\notin D_{j'+1}$
otherwise. Hence, in the case of the suffix automaton, it is enough to
test if $S[0] = P[m-1]$ to know all the matching prefixes in the
interval of the first run of $S$. The idea is then to compute the
configurations $D_j$, restricted to the states with index in
$\mathcal{I}(P)$, corresponding to positions $j\in\mathcal{I}(S)$ only
by reading $S$ run-wise. Observe that it is not possible to detect the
single prefix of $S$ in the language, if any, ending at a position
between $\alpha_S(j-1)$ and $\beta_S(j-1)$ using $D_{\alpha_S(j)}$,
because $q_m\notin D_{\alpha_S(j)}$ if the prefix does not end at
position $\beta_S(j-1)$, or equivalently if $\ell_S(j-1) >
l_{\rho-1}$. To overcome this problem we modify the automata by adding
a self-loop on $q_m$ labeled by $P[m-1]$. In this way, if $q_m$
belongs to a configuration $D$ with index between $\alpha_S(j-1)+1$
and $\beta_S(j-1)+1$ then $q_m$ will be in $D_{\alpha_S(j)}$. Observe
that, if $q_m\in D_{\alpha_S(j)}$, then $q_m\notin D_{\alpha_S(j+1)}$,
since $S[\alpha_S(j)]\neq S[\alpha_S(j + 1)]$.

Let $\bar{D}_j = \{ 1\le i \le \rho\ |\ q_{\alpha_P(i)}\in
D_{\alpha_S(j)}\}\,, $ be the encoding of the configuration of
$\mathcal{A}(P)$ after reading $S[0\,\ldots\, \beta_S(j-1)]$, for
$0\le j\le |\textsc{rle}(S)|$. For example:
\begin{center}
\begin{tabular}{ll}
\hline
$P = cttcct$ & \\
$S = cttccttcct$ & \\
\hline
$\bar{D}_1 = \{ 1 \}$ &
$\bar{D}_2 = \{ 2 \}$ \\
$\bar{D}_3 = \{ 1, 3 \}$ &
$\bar{D}_4 = \{ 2, 4 \}$ \\
$\bar{D}_5 = \{ 1, 3 \}$ &
$\bar{D}_6 = \{ 4 \}$ \\
\hline
\end{tabular}
\end{center}
\begin{center}
\begin{tabular}{|l|lllll|}
\hline
$i$ & 0 & 1 & 2 & 3 & 4 \\
\hline
$\alpha_P(i)$ & 0 & 1 & 3 & 5 & 6 \\
\hline
\end{tabular}
\begin{tabular}{|l|lllllll|}
\hline
$i$ & 0 & 1 & 2 & 3 & 4 & 5 & 6 \\
\hline
$\alpha_S(i)$ & 0 & 1 & 3 & 5 & 7 & 9 & 10 \\
\hline
\end{tabular}
\end{center}
Note that $q_0$ is not represented and that $\bar{D}_0$ is equal to
$\emptyset$ and $\{ 1,\ldots,\rho\}$ for $\mathcal{A}(P)$ and
$\mathcal{S}(P)$, respectively. We now describe how to compute the
configurations $\bar{D}_j$, starting with the automaton
$\mathcal{A}(P)$. The following property easily follows from Equation~\ref{eq:delta}:
\begin{lemma}\label{lemma:trans}
For any $0\le j < |\textsc{rle}(S)|$ and $1\le i \le \rho$, $q_{\alpha_P(i)}\in D_{\alpha_S(j+1)}$ if and only if either
\begin{enumerate}[a)]
\item $q_{\alpha_P(i-1)}\in D_{\alpha_S(j)}\wedge S[\alpha_s(j)] = c_{i-1}\wedge \ell_S(j) = l_{i-1}$, for $i=2,\ldots,\rho-1$;
\item $q_{\alpha_P(i-1)}\in D_{\alpha_S(j)}\wedge S[\alpha_s(j)] = c_{i-1}\wedge \ell_S(j) \ge l_{i-1}$, for $i\in\{1,\rho\}$.
\end{enumerate}
\end{lemma}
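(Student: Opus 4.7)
The plan is to start from Equation~\ref{eq:delta}, which expresses $D_{\alpha_S(j+1)}$ as the union of $\delta^*(q_k, c^\ell)$ taken over states $q_k\in D_{\alpha_S(j)}$ with $k\in\mathcal{I}(P)$, where $c = S[\alpha_S(j)]$ and $\ell = \ell_S(j)$. This reduces the lemma to characterising, for each such starting state $q_{\alpha_P(i')}$, when $q_{\alpha_P(i)}$ lies in $\delta^*(q_{\alpha_P(i')}, c^\ell)$.

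The key structural fact I would exploit is that in $\mathcal{A}(P)$ only $q_0$ has a self-loop (on every symbol) and, by the modification introduced just before the lemma, $q_m$ has an added self-loop on $c_{\rho-1}$; every other state $q_r$ advances only to $q_{r+1}$ on reading $P[r]$ and dies on any other symbol. Consequently, from $q_{\alpha_P(i')}$ with $1\le i' < \rho$, reading $c_{i'}$ exactly $l_{i'}$ times reaches $q_{\alpha_P(i'+1)}$; any smaller number of reads stops at an intermediate state whose index is not in $\mathcal{I}(P)$; and one further read either kills the state (when $i'+1<\rho$, since $c_{i'}\neq c_{i'+1}$) or preserves $q_m$ (when $i'+1=\rho$). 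Starting from $q_0$, the only state with index in $\mathcal{I}(P)$ ever reached by $\delta^*(q_0, c^\ell)$ is $q_{\alpha_P(1)}$, and exactly when $c=c_0$ and $\ell\ge l_0$, with $q_0$ being refreshed by its own self-loop.

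Combining these observations yields both directions of the equivalence by case analysis on $i$. For $2\le i\le \rho-1$ the only possible predecessor is $q_{\alpha_P(i-1)}$ with $c=c_{i-1}$ and $\ell = l_{i-1}$ exactly (fewer reads stop short of $q_{\alpha_P(i)}$; more reads kill the state since $c_{i-1}\neq c_i$), which is case (a). For $i=1$ the predecessor is $q_0$ (always in $D_{\alpha_S(j)}$ for $\mathcal{A}(P)$, so the condition $q_{\alpha_P(0)}\in D_{\alpha_S(j)}$ is automatic) and we need $c=c_0$ with $\ell\ge l_0$; for $i=\rho$ the predecessor is $q_{\alpha_P(\rho-1)}$ and we need $c=c_{\rho-1}$ with $\ell\ge l_{\rho-1}$, the weaker inequality coming from the self-loop on $q_m$. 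Both fall under case (b).

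The main subtlety I anticipate is justifying why $\ell\ge l_{i-1}$ (rather than equality) is the correct condition specifically for $i\in\{1,\rho\}$ and not for the middle indices: this hinges precisely on the two self-loops, and one has to verify carefully that no alternative route through $\mathcal{I}(P)$ can produce $q_{\alpha_P(i)}$ from a run of the ``wrong'' length.
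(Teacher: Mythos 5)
Your proposal is correct and takes essentially the same route as the paper, which offers no explicit proof beyond asserting that the lemma ``easily follows'' from Equation~\ref{eq:delta}; your case analysis of $\delta^*(q_{\alpha_P(i')},c^{\ell})$ --- deterministic advance for interior states, the $\Sigma$-self-loop on $q_0$ giving $\ell\ge l_0$ for $i=1$, and the added self-loop on $q_m$ giving $\ell\ge l_{\rho-1}$ for $i=\rho$ --- is exactly the intended justification. The one predecessor you do not explicitly rule out in the ``only if'' direction for $i=\rho$ is $q_m$ itself (persisting via its own self-loop without $q_{\alpha_P(\rho-1)}$ being active); the paper disposes of this just before the lemma by noting that $q_m\in D_{\alpha_S(j)}$ forces the last symbol read to be $c_{\rho-1}$, so $S[\alpha_S(j)]\neq c_{\rho-1}$ and the self-loop cannot fire across consecutive runs of $S$.
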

Based on the above Lemma and the fact that there is a self-loop on
$q_0$ labeled by $\Sigma$, we have that
\begin{equation}\label{eq:trans}
\begin{array}{ll}
\bar{D}_{j+1} & = \{ i + 1\ |\ i\in\bar{D}_j\cup\{ 0 \} \} \\
& \cap\,\,\{ 1\le i\le \rho\ |\ S[\alpha_S(j)] = c_{i-1} \} \\
& \cap\,\,(\{ 2\le i\le \rho-1\ |\ \ell_S(j) = l_{i-1} \}\cup \{ i\in\{1,\rho\}\ |\ \ell_S(j) \ge l_{i-1} \}) \\
\end{array}
\end{equation}
for $j\ge 0$.
We now show to implement Equation~\ref{eq:trans} efficiently using word-level parallelism.
Let
$$
\begin{array}{ll}
B_1(c) & = \{ 1\le i\le \rho\ |\ c = c_{i-1} \}\,, \\
B_2(l) & = \{ 1\le i\le \rho\ |\ l = l_{i-1} \} \\
& \cup\,\, \{ i\in \{1, \rho\}\ |\ l\ge l_{i-1} \}
\end{array}
$$ for any $c\in\Sigma$ and $1\le l\le |P| + 1$. The set $B_1(c)$
includes the indices of all the runs whose symbol is equal to $c$.
Similarly, The set $B_2(l)$ includes the indices of all the runs whose
length is equal to $l$ and, in addition, the index of the first and/or
last run if the corresponding length is smaller than or equal to $l$.
Note that $B_2(l) = \{1,\rho\}$, for any $l > |P|$; thus, we can
define $B_2$ up to $|P| + 1$ and map any integer greater than $|P|$
onto $|P|+1$. For example, for $P = cttcct$ we have:
$$
\begin{array}{ll}
B_1(c) = \{ 0, 2 \} & B_2(1) = \{ 0, 3 \} \\
B_1(t) = \{ 1, 3 \} & B_2(2) = \{ 0, 1, 2, 3 \} \\
\end{array}
$$
and $B_2(l) = \{ 0, 3 \}$, for $3\le l\le 7$.
We represent the configurations $\bar{D}$ and the sets $B$ as
bit-vectors of $\rho$ bits, denoted with $\bpstyle{D}$ and
$\bpstyle{B}$, respectively. Based on these two definitions,
Equation~\ref{eq:trans} can be rewritten as
$$
\bar{D}_{j+1} = \{ i + 1\ |\ i\in\bar{D}_{j}\cup \{ 0 \} \}\cap B_1(S[\alpha_S(j)])\cap B_2(\min(\ell_S(j), |P| + 1))\,,
$$
which corresponds to the following bitwise operations
$$
\bpstyle{D}_{j+1} = ((\bpstyle{D}_j\ll 1) \OR 0^{\rho-1}1) \AND \bpstyle{B}_1(S[\alpha_S(j)])\AND \bpstyle{B}_2(\min(\ell_S(j), |P| + 1))\,.
$$

We now describe the computation of $\bar{D}_j$ for the automaton
$\mathcal{S}(P)$. The simulation of $\mathcal{S}(P)$ is equivalent to
the one of $\mathcal{A}(P)$, up to state $q_0$ and the first
transition. For $j > 1$, Lemma~\ref{lemma:trans} holds if index $1$ is
handled by case $a$ instead of $b$, which accounts for the fact that
there is no self-loop on $q_0$. We can thus change
Equation~\ref{eq:trans} and the definition of $B_2$ accordingly.
Instead, for $j = 1$ (i.e., the transition on the first run of $S$),
we have
$$
q_{\alpha_P(i)}\in D_{\alpha_S(1)}\iff S[0] = c_{i-1}\wedge (i = \rho\vee \ell_S(0) \le l_{i-1})\,,\text{ for } i=1,\ldots,\rho\,,
$$
and
$$
\bar{D}_1 = \{ 1\le i\le \rho\ |\ S[0] = c_{i-1}\}\cap (\{ 1\le i < \rho\ |\ \ell_S(0)\le l_{i-1}\}\cup\{\rho\})\,,
$$
since, before the first transition, all the states are active because
of the $\varepsilon$-transitions and so state $q_{\alpha_P(i)}$ can be
activated by any state with index between $\alpha_P(i-1)$ and
$\alpha_P(i)-1$ by reading a run with symbol equal to $c_{i-1}$ and
length no larger than $l_{i-1}$, with the exception of $q_m$ which can
be activated with a run of any length because of the self-loop. To
account for this case we define the set $B_3(l) = \{ 1\le i\le
\rho\ |\ l\le l_{i-1}\}\cup \{ \rho \}$, for $1\le l\le |P| + 1$. The
set $B_3(l)$ includes the indices of all the runs whose length is
greater than or equal to $l$ and, in addition, the index of the last
run. Note that $B_3(l) = \{\rho\}$ for $l\ge |P| + 1$, so we can
define $B_3$ up to $|P| + 1$ and map any integer greater than $|P|$
onto $|P| + 1$, as done for $B_2$. Then, we have:
$$
\bar{D}_{1} = B_1(S[0])\cap B_3(\min(\ell_S(0), |P| + 1))\,,
$$
which corresponds to the following bitwise operations
$$
\bpstyle{D}_{1} = \bpstyle{B}_1(S[0])\AND \bpstyle{B}_3(\min(\ell_S(0), |P| + 1))\,.
$$
The computation of a single configuration $\bar{D}_j$ requires
$O(\ceil{\rho / w})$ time. The total time complexity of the simulation
is thus $O(|S| \ceil{\rho / w})$, as the total number of
configurations is $|\textsc{rle}(S)|\le |S|$. The bit-vectors
$\bpstyle{B}$ can be preprocessed in $O(m + (\sigma + m)\ceil{\rho /
  w})$ time and require $O((\sigma + m)\ceil{\rho / w})$ space. The
string $P$ or $S$ can be given in either unencoded or run-length
encoded form. In the former case its run-length encoding
does not need to be stored. It can be computed on the fly in $O(m)$ or
$O(|S|)$ time, using constant space, during the preprocessing or
searching phase.

\begin{figure}[t]
\begin{center}
\begin{scriptsize}
\begin{tabular}{|ll|}
\hline
\multicolumn{2}{|l|}{\textsc{rl-preprocess}$(P)$}\\
\ninstr & $\rho\leftarrow |\textsc{rle}(P)|$ \\
\ninstr & \textbf{for} $c\in\Sigma$ \textbf{do} $\bpstyle{B}_1[c]\leftarrow 0^{\rho}$ \\
\ninstr & \textbf{for} $i\leftarrow 1$ \textbf{to} $|P| + 1$ \textbf{do} $\bpstyle{B}_2[i]\leftarrow 0^{\rho}$\\
\ninstr & $i\leftarrow 0$ \\
\ninstr & \textbf{for} $(c,l)\in \textsc{rle}(P)$ \textbf{do} \\
\ninstr & \qquad $H\leftarrow 0^{\rho-1}1\ll i$ \\
\ninstr & \qquad $\bpstyle{B}_1[c]\leftarrow \bpstyle{B}_1[c]\OR \bpstyle{H}$ \\
\ninstr & \qquad \textbf{if} $i = 0$ \textbf{or} $i = \rho - 1$ \textbf{then} \\
\ninstr & \qquad \qquad $\ell = l$ \\
\ninstr & \qquad \qquad \textbf{for} $j\leftarrow l$ \textbf{to} $|P| + 1$ \textbf{do} \\
\ninstr & \qquad \qquad \qquad $\bpstyle{B}_2[j]\leftarrow \bpstyle{B}_2[j]\OR \bpstyle{H}$ \\
\ninstr & \qquad \textbf{else} $\bpstyle{B}_2[l]\leftarrow \bpstyle{B}_2[l]\OR \bpstyle{H}$ \\
\ninstr & \qquad $i\leftarrow i + 1 $ \\
\ninstr & \textbf{return}$(B_1, B_2, \rho, \ell)$ \\
\hline
\end{tabular}
\setcounter{instr}{0}
\begin{tabular}{cc}
\begin{tabular}{|ll|}
\hline
\multicolumn{2}{|l|}{\textsc{rl-shift-and}$(P, T)$}\\
\ninstr & $(B_1, B_2, \rho, \ell)\leftarrow \textsc{rl-preprocess}(P)$ \\
\ninstr & $\bpstyle{D}\leftarrow 0^{\rho}$ \\
\ninstr & $j\leftarrow 0$ \\
\ninstr & \textbf{for} $(c,l)\in \textsc{rle}(T)$ \textbf{do} \\
\ninstr & \qquad $\bpstyle{D}\leftarrow ((\bpstyle{D}\ll 1)\OR 0^{\rho - 1}1)\AND \bpstyle{B}_1[c]$ \\
\ninstr & \qquad $\bpstyle{D}\leftarrow \bpstyle{D} \AND \bpstyle{B}_2[\min(l,|P|+1)]$ \\
\ninstr & \qquad \textbf{if} $\bpstyle{D}\AND 10^{\rho-1}\neq 0^{\rho}$ \textbf{then} \\
\ninstr & \qquad \qquad \textsf{Output}($j + \ell$) \\
\ninstr & \qquad $j\leftarrow j + l$ \\
& \\
& \\
& \\
& \\
& \\
& \\
& \\
& \\
& \\
\hline
\end{tabular} &
\setcounter{instr}{0}
\begin{tabular}{|ll|}
\hline
\multicolumn{2}{|l|}{\textsc{rl-bndm}$(P, T)$}\\
\ninstr & $(B_1, B_2, \rho, \ell)\leftarrow \textsc{rl-preprocess}(P^r)$ \\
\ninstr & $s\leftarrow m - 1$ \\
\ninstr & \textbf{while} $s < |T|$ \textbf{do} \\
\ninstr & \qquad $\bpstyle{D}\leftarrow 1^{\rho}$ \\
\ninstr & \qquad $b\leftarrow s-m+1$ \\
\ninstr & \qquad \textbf{while} $s + 1 < |T|$ \textbf{and} $T[s] = T[s + 1]$ \textbf{do} \\
\ninstr & \qquad \qquad $s\leftarrow s + 1$ \\
\ninstr & \qquad $j\leftarrow 0$, $k\leftarrow 1$ \\
\ninstr & \qquad \textbf{for} $(c,l)\in \textsc{rle}(T[b\,\ldots\,s]^r)$ \textbf{do} \\
\ninstr & \qquad \qquad $\bpstyle{D}\leftarrow \bpstyle{D} \AND \bpstyle{B}_1[c]$ \\
\ninstr & \qquad \qquad $\bpstyle{D}\leftarrow \bpstyle{D} \AND \bpstyle{B}_2[\min(l,|P|+1)]$ \\
\ninstr & \qquad \qquad \textbf{if} $\bpstyle{D}\AND 10^{\rho-1}\neq 0^{\rho}$ \textbf{then} \\
\ninstr & \qquad \qquad \qquad \textbf{if} ($j + \ell \ge |P|$) \textbf{then} \\
\ninstr & \qquad \qquad \qquad \qquad \textsf{Output}($s - j - \ell$) \\
\ninstr & \qquad \qquad \qquad \textbf{else} $k\leftarrow j + \ell$ \\
\ninstr & \qquad \qquad $\bpstyle{D}\leftarrow \bpstyle{D}\ll 1$ \\
\ninstr & \qquad \qquad $j\leftarrow j + l$ \\
\ninstr & \qquad $s\leftarrow s + m - k$ \\
\hline
\end{tabular} \\
\end{tabular}
\end{scriptsize}
\end{center}
\label{fig:rle-search}
\caption{The variants of \SA and \BNDM based on the run-length encoding.}
\end{figure}

\section{The variants of \SA and \BNDM}

The variants of the \SA and \BNDM algorithms based on the encoding
described in the previous section run in $O(n\ceil{\rho/w})$ and
$O(nm\ceil{\rho/w})$ time, respectively, using $O((\sigma +
m)\ceil{\rho/w})$ space. The encoding of the suffix automaton is
however not ideal in practice, due to the different first transition.
We now describe a variant of \BNDM, based on a modified suffix
automaton, where the first transition of the automaton is equal to the
subsequent ones. Let $j$ be the ending position of a window of length
$m$ in $T$ and let $\bar{j}$ be the minimum position such that
$\bar{j}\ge j$ and $T[\bar{j}]\neq T[\bar{j} + 1]$. In other words,
$\bar{j}$ is the ending position of the run of $\textsc{rle}(T)$
spanning $T[j]$. Consider the window of length $m + \bar{j} - j$
ending at $\bar{j}$. By Lemma~\ref{lemma:single}, if $\rho\ge 2$,
there can be at most one occurrence of $P$ ending at a position
between $j$ and $\bar{j}$. Our idea is to process this larger window
with $\mathcal{S}(P^r)$, finding the single occurrence of $P$ in it,
if any, and the length $k'$ of the longest proper prefix of $P$ ending
at position $\bar{j}$. We then shift the window by $m + \bar{j} - j -
k'$ positions to the right. If $\bar{j} - j < m$, the time needed to
process this window is $O(m)$. Otherwise, if $\bar{j} - j \ge m$,
observe that, since $k' < m$, the symbols of $T$ in the interval
$[j+1,\bar{j}-m]$ are covered by this window only and therefore the time
needed to process this window is $O(m)$, for reading the symbols of
$T$ in the intervals $[j-m+1, j]$ and $[\bar{j}-m+1, \bar{j}]$, plus a
term which, summed over all such windows, is $O(n)$. Hence, the time
complexity of the algorithm remains $O(nm\ceil{\rho/w})$ in the
worst-case.

Suppose that we simulate the automaton $\mathcal{S}(P^r)$ on
$(T[j-m+1\,\ldots\,\bar{j}])^r$ and let $k'$ be the length of the
longest suffix of $T[j-m+1\,\ldots\,\bar{j}]$ which is a proper prefix
of $P$. Observe that, if $k'$ does not correspond to the ending
position of a run of $\textsc{rle}(P)$, i.e., if $P[k'-1] = P[k']$,
then the window corresponding to shift $k'$ does not contain an
occurrence of $P$, because $P[k'-1] = T[\bar{j}]$ and $T[\bar{j}]\neq
T[\bar{j}+1]$ (shifting the window by $m + \bar{j} - j -k'$
corresponds to aligning position $k'$ with $\bar{j}+1$). Indeed, it is
easy to see that the smallest useful shift corresponds to the length
$k'$ of the longest suffix of $T[j-m+1\,\ldots\,\bar{j}]$ which is a
proper prefix of $P$ and such that $P[k'-1]\neq P[k']$.
\begin{figure}[t]
\begin{center}
\resizebox{0.47\textwidth}{!}{
\begin{tikzpicture}[node distance=2cm,auto]
\node[state] (q_0) {$0$};
\node[state] (I) [above of=q_0] {$I$};
\node[state] (q_1) [right of=q_0] {$\underline{1}$};
\node[state] (q_2) [right of=q_1] {$2$};
\node[state] (q_3) [right of=q_2] {$\underline{3}$};
\node[state] (q_4) [right of=q_3] {$4$};
\node[state] (q_5) [right of=q_4] {$\underline{5}$};
\node[state,accepting] (q_6) [right of=q_5] {$6$};

\path[->] (q_0) edge node {$c$} (q_1)
                edge [loop left] node {$c$} ()
          (q_1) edge node {$t$} (q_2)
          (q_2) edge node {$t$} (q_3)
          (q_3) edge node {$c$} (q_4)
          (q_4) edge node {$c$} (q_5)
          (q_5) edge node {$t$} (q_6)
          (I) edge node {$\varepsilon$} (q_0)
          (I) edge [bend left] node {$\varepsilon$} (q_1)
          (I) edge [bend left] node {$\varepsilon$} (q_3)
          (I) edge [bend left] node {$\varepsilon$} (q_5)
          (I) edge [bend left] node {$\varepsilon$} (q_6);
\end{tikzpicture}
}
\caption{(a) The automaton $\mathcal{S}_R(P)$ for the pattern $P =
  cttcct$.}
\label{fig:example2}
\end{center}
\end{figure}
Hence, we can modify the automaton $\mathcal{S}(P)$ so that it
recognizes the subset of $\Suff(P)$ $\{ P[i\,\ldots\,m-1]\ |\ i =
0\vee P[i]\neq P[i-1]\}$. To accomplish this, it is enough to remove
all the $\varepsilon$-transitions entering a state $q_i$ with $P[i] =
P[i-1]$. Observe that the automaton can recognize the occurrence of
$P$, if any, ending at a position between $j$ and $\bar{j}$ only if it ends at position $\bar{j}$. To
account for this problem, we add
a self-loop on $q_0$ labeled by $P[0]$. We denote the resulting
automaton with $\mathcal{S}_R(P)$. The transition function of the
$\mathcal{S}_R(P)$ automaton is defined as follows:
    $$
    \delta(q,c) \defAs \begin{cases}
      \{ q_0, q_1 \} & \text{if } q = q_0 \text{ and } c = P[0] \\
    \{q_{i+1}\} & \text{if } q = q_{i} \text{ and } c = P[i] ~~(0 \le i < m)\\
    \{ q_i\ |\ i = 0\vee P[i]\neq P[i-1] \} & \text{if } q = I  \text{ and } c = \varepsilon\\
    \emptyset & \text{otherwise}
    \end{cases}
    $$
Figure~\ref{fig:example2} shows the automaton $\mathcal{S}_R(P)$ for
$P = cttcct$. It is not hard to verify that Lemma~\ref{lemma:trans}
also holds for $\mathcal{S}_R(P)$ and therefore the simulation of this
automaton is analogous to the one of the $\mathcal{A}(P)$ automaton.

The pseudocode of the variants of the \SA and \BNDM algorithms based
on the run-length encoding is shown in Figure~\ref{fig:rle-search}.

\section{Acknowledgments}

We thank Jorma Tarhio for helpful comments.

\bibliographystyle{abbrv}
\bibliography{rle_nfa}

\end{document}